\algnewcommand\algorithmicforeach{\textbf{for each}}
\newtheorem{lemma}{Lemma}
\newtheorem{theorem}{Theorem}
\newtheorem{corollary}{Corollary}
\theoremstyle{remark}
\definecolor{FGreen}{RGB}{1,100,10}
\newcommand{\idty}{\mathds{1}}
\begin{document}

\title{Partial Self-Correction in Layer Codes}
\author{Dominic J. Williamson}
\affiliation{School of Physics, University of Sydney, Sydney, NSW 2006, Australia}
\date{October 2025}

\begin{abstract}
\noindent
The storage of large-scale quantum information at finite temperature requires an autonomous and reliable quantum hard drive, also known as a self-correcting quantum memory. 
It is a long-standing open problem to find a self-correcting quantum memory in three dimensions. 
The recently introduced Layer Codes achieve the best possible scaling of code parameters and logical energy barrier in three dimensions, these are tantalizing features for the purposes of self-correction. 
In this work we show that a family of Layer Codes, based on good Quantum Tanner Codes, exhibit partial self-correction. Their memory time grows exponentially with linear system size, up to a length scale that is exponential in the inverse temperature. 
At this length scale, the memory time scales as a double exponential of inverse temperature. 
To establish this result we introduce a concatenated matching decoder that combines three rounds of parallelized minimum-weight perfect-matching with a decoder for good Quantum Tanner Codes. We show that our decoder corrects errors up to a constant fraction of the energy barrier, and a constant fraction of the code distance, for a family of Layer Codes. 
Our results position Layer Codes as the leading candidate for a partially self-correcting memory in three dimensions. While they fall short of achieving strict self-correction in the thermodynamic limit, our work highlights the potential of these local codes in three dimensions, with fast distance and logical qubit growth, fast decoders, and a long memory time over a wide range of parameters. 
\end{abstract}

\maketitle

\vspace{1cm}

Isolated quantum information is inherently vulnerable to corruption due to interaction with the environment. Storing quantum information in a scalable way requires protection from noise. This can be achieved with a quantum error-correcting code. The leading approaches to quantum error correction rely on persistent active measurements and feed-forward operations to suppress errors in a scalable quantum memory~\cite{Acharya2023,Bluvstein2023,Lacroix2024,Ryan-Anderson2024}. This leads to a large classical control overhead, which poses a challenge to scaling up fault-tolerant quantum computers. 

The Platonic ideal of a quantum hard drive is a system into which a quantum state can be encoded, left for an arbitrarily long time, and then efficiently decoded to recover the initial state. A self-correcting quantum memory is the next best thing~\cite{Brown2016Quantum}. It is a reasonable physical system that can reliably store encoded quantum information for a time that grows with the system size, while being weakly coupled to a thermal bath. For a more precise definition, we refer the interested reader to Brell's ``Caltech rules''~\cite{Brell2016A}. Self-correcting quantum memories have attracted much recent interest~\cite{Zhu2022Topological,Dua2023Quantum,Miguel2023A,Hong2024Quantum,Hsin2024Non,Zhao2024Energy,Placke2024Topological,Ma2025Circuit,Guo2025Towards,Lin2024Proposals,Roberts2025Cored,Sriram2025Diffusion}. 
In this work, we focus on the specific notion of a self-correcting memory at finite temperature. However, we remark that interesting progress has been made on related notions that allow more complicated local dynamics~\cite{Balasubramanian2024A}. 

The prototypical example of a self-correcting classical memory is the classical Ising ferromagnet in two-dimensions, which can faithfully store a classical bit for a time that is exponential in the linear system size. Here, encoding sends the states of a logical bit to the distinct ground states of the ferromagnet, and decoding is achieved via a majority vote. The canonical example of a self-correcting quantum memory is the four-dimensional toric code with membrane-like logical operators~\cite{dennis2002topological,alicki2010thermal}, which has been explored more recently in Refs.~\cite{Aasen2025A,Bergamaschi2025Rapid}. This brings us to the long standing open question that motivates this work: \textit{Is there a self-correcting quantum memory in three dimensions?} Several necessary conditions have been established, including the presence of a sufficiently strong (free) energy barrier~\cite{Temme2014Thermalization,Temme2015How,Baspin2025The,Rakovszky2024Bottlenecks}. The focus on three dimensions is because in two dimension, or less~\cite{Baspin2025Stabilizer}, self-correction has been ruled out by a constant upper bound on the attainable energy barrier~\cite{Alicki2009,bravyi2009no,Landon-Cardinal2012Local,Komar2016Necessity}. 

The search for a three-dimensional self-correcting quantum memory has motivated numerous interesting proposals. The three-dimensional toric code is not a self-correcting quantum memory at finite temperature due to its deconfined particle excitations~\cite{Castelnovo2007,Castelnovo2008}. However, the confined line excitation sector of the three-dimensional toric code is predicted to make it a stable classical self-correcting memory~\cite{Poulin2018Self}.
Haah made a landmark breakthrough with his discovery of the cubic code~\cite{haah2011fractal}. This model was shown to exhibit a logarithmic energy-barrier and \textit{partial self-correction}, a weak form of self-correcting behaviour with a polynomially growing memory time that only persists up to a temperature dependent system size~\cite{PhysRevLett.107.150504,Bravyi2013}. 
Interestingly, Haah also established a logarithmic upper bound on the energy barrier of any translation-invariant three-dimensional Pauli stabilizer code~\cite{haah2013commuting} which severely limits the achievable memory time scaling. 
Michnicki broke the logarithmic upper bound on the energy barrier in three dimensions by breaking translation symmetry in his construction of welded codes~\cite{Michnicki20123D,Michnicki20143D}. These codes exhibit a polynomial energy barrier $\Delta=\Theta(L^{\frac{2}{3}})$ and a stronger form of partial self-correction with a memory time that grows exponentially with the energy barrier up to a temperature dependent system size~\cite{siva2017topological}. 
Three dimensional topological subsystem codes that support single-shot quantum error correction have also been explored in the context of self-correction~\cite{Bombin2015,Kubica2021Single,Li2024Phase} and are related to the more exotic notion of symmetry-protected self-correction~\cite{Roberts2020Symmetry,Roberts20203F,Stahl2021Symmetry,Stahl2022Self}. 

In this work we assess the performance of the recently introduced \textit{Layer Codes} as self-correcting quantum memories.  Ref.~\cite{Williamson2023} demonstrated that Layer Codes based on any family of good quantum low-density parity-check (qLDPC) codes~\cite{Panteleev2022} achieve the optimal scaling of code parameters $[[\Theta(L^3),\Theta(L),\Theta(L^2)]]$ in three dimensions~\cite{bravyi2010tradeoffs}, see also related works Refs.~\cite{portnoy2023local,Lin2023,Li2024Transform}. Furthermore, it was demonstrated that Layer Codes based on a family of good Quantum Tanner Codes~\cite{leverrier2022quantum} achieve the optimal scaling of the energy barrier $\Delta=\Theta(L)$ in three dimensions. 
A naive estimate of the memory time based on the energy barrier suggests Layer Codes have the potential to be self-correcting. A more careful analysis based on the free energy barrier demonstrates that they are not self-correcting in the strict sense as the system size diverges~\cite{Baspin2025The}. 

Here, we show that Layer Codes based on a family of good Quantum Tanner Codes are in fact partially self-correcting, with a strong growth of the memory lifetime up to a temperature dependent system size. 
We introduce a concatenated matching decoder that combines three rounds of parallel minimum-weight perfect-matching on layers with a decoder for the good Quantum Tanner Codes upon which the Layer Codes are based, see also Ref.~\cite{Gu2025Layer}. 
For any family of Layer Codes based on good Quantum Tanner Codes~\cite{leverrier2022quantum}, the decoders in Refs.~\cite{Leverrier2022Efficient,Dinur2022Good,Gu2022An,Leverrier2022A,Gu2024Single} suffice. 
We show that our decoder is capable of decoding errors up to a constant fraction of the energy barrier and a constant fraction of the distance. 
Combining this result with the techniques from Ref.~\cite{Bravyi2013} we derive the following bound on the memory error $\varepsilon$ after time $t$
\begin{align}
    \varepsilon(t) \leq O(t) \cdot e^{-a c \beta  L }
\end{align}
which implies a memory time scaling 
\begin{align}
    T_{mem} =  \Omega(e^{a c \beta  L}),
\end{align} 
where we have assumed a sufficiently small system size $L\leq O(e^{\frac{1}{3}(1-a)\beta})$ and a sufficiently high inverse temperature $\beta \gg 1 $ for simplicity of presentation. 
When the largest allowable system size is chosen as a function of inverse temperature we find a memory time
\begin{align}
    T_{mem \bullet} &= \Omega\big(\exp\big[ ac \beta e^{\frac{1}{3}(1-a)\beta}\big]\big), 
\end{align}
again for $\beta\gg 1$. 

Our partial-self correction results apply to a number of encoded qubits that grows as fast as possible given the code distance and spatial dimension. 
In summary, we have established a strong form of partial self-correction in Layer Codes with a scaling performance that supersedes previous proposals in the literature.

\begin{figure}[t]
    \centering
    \includegraphics[page=10]{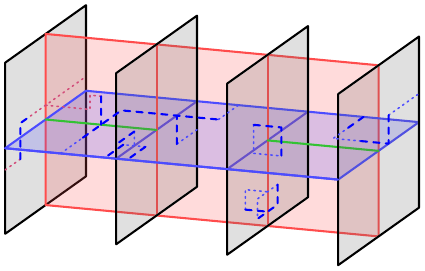}
    \caption{The Layer Code based on the [[4,2,2]] input code.  The grey $xz$-layers, blue $xy$-layer, and red $yz$-layer, depict surface codes that correspond to physical qubits, the $XXXX$ check, and the $ZZZZ$ check, of the input code, respectively. 
    The blue $x$-oriented, red $z$-oriented, and green $y$-oriented, junction lines correspond to nontrivial topological defects, see Fig.~\ref{fig:Defects}. The intersection of two  layers without a junction line indicates a simple decoupled crossing. 
    The grey and red surface codes have rough e-condensing boundaries on the top and bottom, while the grey and blue surface codes have smooth m-condensing boundaries on the front and back. The dashed line indicates an $X$-type logical operator. }
    \label{fig:422}
\end{figure}


\textit{Layer Codes.}---We first briefly introduce relevant background material on Layer Codes~\cite{Williamson2023}, which are a family of stabilizer codes~\cite{gottesman1997stabilizer}. 
The Layer Code construction takes an $[[n,k,d]]$ Calderbank-Shor-Steane~\cite{calderbank1996good,steane1996multiple} (CSS) code as input and outputs a local code in three spatial dimensions. When applied to a qLDPC code, this produces a three-dimensional code with parameters $[[O(n^3),k,\Omega(dn)]]$. Here, for simplicity of presentation, we have assumed the number of $X$-type ($Z$-type) checks in the input code is $\Theta(n)$.  
For an input code with energy barrier $\Delta$, the Layer Code has energy barrier $\Theta(\Delta)$. 
The Layer Codes derived from a family of good Quantum Tanner Codes have code parameters $[[\Theta(L^3),\Theta(L),\Theta(L^2)]]$ and energy barrier $\Delta=\Theta(L)$ which are optimal for a local code in three dimensions~\cite{bravyi2009no,bravyi2010tradeoffs}. 

The Layer Code construction replaces each data qubit, $X$-check, and $Z$-check, of the input code with a patch of surface code~\cite{bravyi1998quantum}. The surface codes assigned to data qubits have standard boundary conditions, while those assigned to $X$-checks ($Z$-checks) have smooth (rough) boundary conditions only. See Fig.~\ref{fig:422} for an example of the Layer Code constructed from the [[4,2,2]] code. The surface code patches are effectively glued together by topological defect lines~\cite{Aasen2020}, which are assigned to a subset of their intersections according to the incidence relations of the associated qubits and checks in the input Tanner graph~\cite{Williamson2023}. Where no defect line is assigned, the intersecting layers pass through one another in a simple decoupled fashion. 
The syndromes of a layer code can be identified locally with the $e$ and $m$ syndromes of each surface code layer.
However, when an error passes through a defect, certain syndromes branch into the intersecting surface code. 
These branching rules underlie the favorable distance and energy barrier scaling properties of Layer Codes.
The $X$-type ($Z$-type) logical operators of a Layer Code are derived from the logical operators of the input code concatenated with the surface code, these are multiplied by additional string operators in the $Z$-check ($X$-check) surface code layers to ensure that they commute with all stabilizers and hence satisfy the branching rules. 
See Fig.~\ref{fig:Defects} for a depiction of the topological defect lines that appear in Layer Codes.  Where the defect lines meet, they define point defects, see Ref.~\cite{Williamson2023} for a detailed description.   

\begin{figure}[t]
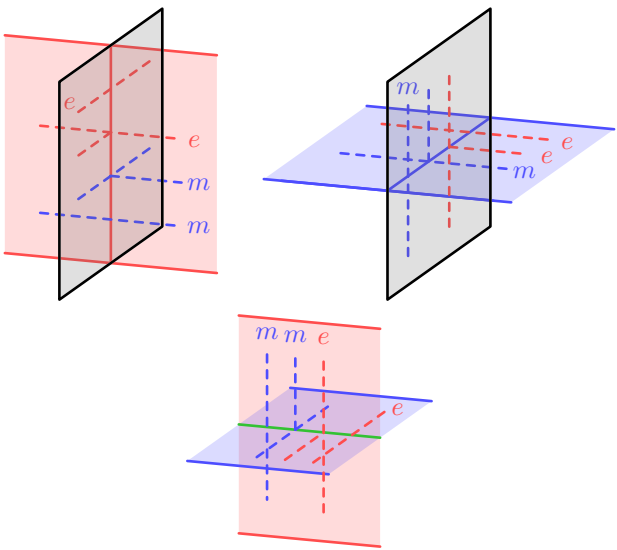

    \centering
    \includegraphics[page=5]{Figures} \quad
    \includegraphics[page=6]{Figures}  \quad
    \raisebox{.45cm}{\includegraphics[page=7]{Figures}} 
    \caption{The line defects in a Layer Code. Here, we depict generators for the syndromes that can be created or destroyed (condensed) locally at the defect. This is dual to a stabilizer description, which can be found in Ref.~\cite{Williamson2023}. The 3-partite line defects where a red (blue) layer ends can be resolved into the above 4-partite red (blue) defect and a rough e-condensing (smooth m-condensing) boundary, see Fig.~\ref{fig:Boundaries}. }
    \label{fig:Defects}
\end{figure}

\begin{figure}[t]
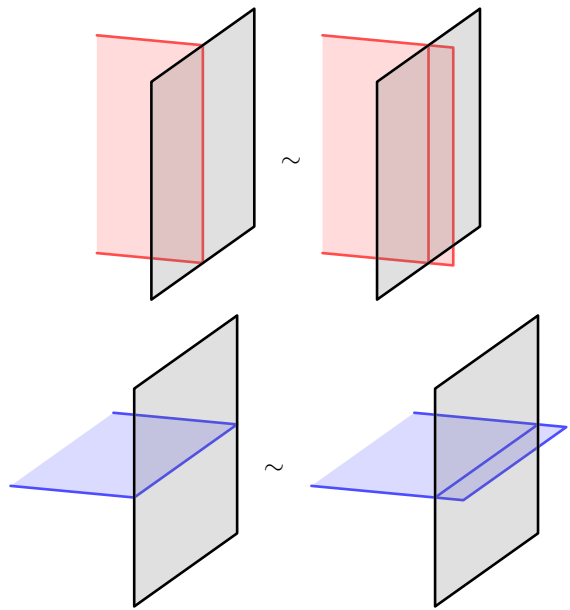

    \centering
    \includegraphics[page=8]{Figures} \\
    \includegraphics[page=9]{Figures}
    \caption{The 3-partite line defects where a red $yz$ (blue $xy$) layer ends can be resolved into a 4-partite defect from Fig.~\ref{fig:Defects} and a smooth e-condensing (rough m-condensing) boundary.}
    \label{fig:Boundaries}
\end{figure}


\textit{Partial Self-Correction.}---We now briefly review relevant background material on partial self-correction. See the appendix for more details on the general setting. 
In Ref.~\cite{Bravyi2013} the authors derive a general bound on the storage error of a qLDPC stabilizer code Hamiltonian interacting with a thermal bath in the Davies weak coupling limit, we follow their notation below. The dynamics is modelled by a standard Markovian master equation that includes a Lindblad generator term to describe the dissipation of energy. The coupling of the system to the bath is described by a set of local quantum jump operators with transition rate coefficients that satisfy detailed balance. This condition biases the dynamics against transitions that increase the total energy, which underlies the utility of the energy barrier for protecting a quantum memory. A constant upper bound on the strength of the system-bath coupling operators, the transition rate coefficients, and the energy penalty of the terms that appear in a Pauli basis expansion of the quantum jump operators is required. 
The logical storage error $\varepsilon$ is quantified by the distance between the initial code state and the code state at time $t$, after the application of a decoder recovery map $\Phi_{\text{ec}}$. 
The bound on this storage error derived in Ref.~\cite{Bravyi2013} is
\begin{align}
    \varepsilon (t) :&= \| \rho(0) - \Phi_{\text{ec}}\big(\rho(t)\big) \|
    \\
    &\leq O(tN) \text{tr} ( Q_m e^{-\beta H} ) \\
    &\leq O(tN) 2^k e^{- a \beta m} ( 1+ e^{-(1-a)\beta} )^N ,
    \label{eq:prePSC}
\end{align}
for any choice of $0<a<1$. Here, $N=\Theta(L^3)$ is the number of checks, $k$ is the number of logical qubits, $m$ is the energy barrier achieved by the decoder, and $Q_m$ is the projection onto states with energy above $m$. The factor $O(tN)$ measures the spacetime volume in which potential syndromes can appear. The remaining factors to the right of the inequality measure a proxy for the error suppression due to the free energy barrier achieved by the decoder. 
A key observation in Ref.~\cite{Bravyi2013} is that the ``entropy'' term $( 1+ e^{-(1-a)\beta} )^N$ is upper bounded by a constant for $N\leq e^{(1-a)\beta}$, hence we have
\begin{align}
\label{eq:PSCBound}
    \varepsilon (t) 
    &\leq O(tN) 2^k e^{- a \beta m} ,
\end{align}
up to sizes $N\leq e^{(1-a)\beta}$. 
This bound implies the following memory time
\begin{align}
    T_{mem} \geq \Omega(e^{a\beta m - k \log 2 - 3\log L })
    \label{eq:TmemGeneral}
\end{align}
up to sizes $L\leq  O(e^{\frac{1}{3}(1-a)\beta})$.
The above bounds were derived and used in Ref.~\cite{Bravyi2013} to establish partial self-correction in the cubic code, and they are what we use below to establish partial self-correction in Layer Codes based on good Quantum Tanner Codes. 

\textit{Partial Self-Correction in Layer Codes.}--- 
We now present our main result. 
\begin{theorem}[Partial Self-Correction in Layer Codes]
    There exist constants $r,c,$ such that for any inverse temperature $\beta$, any state $\rho(0)$ in a code space from a family of Layer Codes based on good Quantum Tanner Codes, any time $t\geq 0$, 
    and any constant $0<a<1$ we have
    \begin{align}
        \varepsilon(t) = \| \rho(0) - \Phi_{\text{ec}}\big(\rho(t)\big) \| \leq O(t) \cdot e^{-(a c \beta -r \log 2) L +3\log L } , \nonumber
    \end{align}
    for $L\leq L_\bullet := O(e^{\frac{1}{3} (1-a) \beta})$.
    Here, $\Phi_{ec}$ is the decoder described in Algorithm~\ref{alg:CMD}, which runs in polynomial time.
    \label{thm:1}
\end{theorem}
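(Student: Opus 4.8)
\emph{Proof strategy.}
The plan is to feed the performance guarantee of the concatenated matching decoder into the general finite-temperature estimate of Ref.~\cite{Bravyi2013}. The only nontrivial input is the following claim about Algorithm~\ref{alg:CMD}: there is a constant $c>0$ such that, for the Layer Codes built from good Quantum Tanner codes, $\Phi_{\text{ec}}$ returns the input codeword on every error that admits a single-qubit reduction path to the code space along which the syndrome weight (energy) never exceeds $cL$; equivalently, the energy barrier $m$ achieved by $\Phi_{\text{ec}}$ satisfies $m\ge cL$. Granting this, every state on which $\Phi_{\text{ec}}$ can fail lies in the support of $Q_{m}$ with $m=cL$, so \eqref{eq:prePSC} together with the Layer Code parameters $N=\Theta(L^3)$ (checks) and $k=\Theta(L)$ (logical qubits) gives $\varepsilon(t)\le O(tL^3)\,2^{\Theta(L)}\,e^{-ac\beta L}\,(1+e^{-(1-a)\beta})^{\Theta(L^3)}$. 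For $L\le L_\bullet=O(e^{\frac13(1-a)\beta})$ one has $N\le e^{(1-a)\beta}$, so the entropy factor is bounded by a constant exactly as in \eqref{eq:PSCBound}; writing $2^{\Theta(L)}=e^{rL\log 2}$ for an appropriate constant $r$ with $k\le rL$, absorbing $L^3=e^{3\log L}$, and collecting exponents yields $\varepsilon(t)\le O(t)\,e^{-(ac\beta-r\log 2)L+3\log L}$, which is the claim. The polynomial runtime is inherited from its constituents: each of the three matching rounds is a disjoint union of minimum-weight perfect matchings on planar surface-code graphs, and the final stage is one of the polynomial-time Quantum Tanner decoders of Refs.~\cite{Leverrier2022Efficient,Dinur2022Good,Gu2022An,Leverrier2022A,Gu2024Single}.

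The substance is the decoder claim, which I would prove in stages tracking Algorithm~\ref{alg:CMD}. Fix an error $E$ with a reduction path of energy below $cL$, so in particular $|\sigma(E)|<cL$, and work with $Z$-type errors (the $X$-type case is identical under $X\leftrightarrow Z$). Within a single surface-code layer the local syndromes are endpoints of string operators, and parallel minimum-weight perfect matching in each orientation corrects the intra-layer part of $E$ up to the residue that the defect lines force to \emph{branch} into the transverse layers, governed by the branching rules of Figs.~\ref{fig:Defects}--\ref{fig:Boundaries}. The key combinatorial point is that crossing a single defect line, or a single point defect, replaces a syndrome endpoint by a controlled, $O(1)$, number of endpoints in the transverse layer, so the total syndrome weight handed from one round to the next grows by at most a constant factor; after the three rounds the residual error is supported along the defect network and is equivalent to a Pauli error in the underlying good Quantum Tanner code of weight at most a constant times $cL$. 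Choosing $c$ small enough that this is below both the decoding radius and the energy-barrier threshold of the Tanner decoder --- each a constant fraction of the linear-in-$L$ Tanner distance and barrier --- the final stage clears the residue, and composing the three corrections $\Phi_{\text{ec}}$ returns the original codeword; since $E$ admitted a path to the code space of energy $cL<\Delta=\Theta(L)$ it was logically trivial, so this is the correct output.

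The main obstacle is the defect bookkeeping inside this decoder claim: one must prove that the geometric branching of syndromes across the blue, red and green junction lines inflates the error weight by only an $O(1)$ factor per round (so three rounds keep it $O(cL)$, well inside the Tanner decoder's radius $\Theta(L)$), and that the parallel matching rounds never themselves produce a logical error, i.e. that a below-barrier error is matched back to a stabilizer-equivalent error in each layer rather than to a different coset. Both rely on the precise incidence structure of Layer Codes --- the surface patches assigned to data qubits versus $X$- and $Z$-checks, their smooth-only/rough-only boundary conditions, and the rough $e$-condensing and smooth $m$-condensing boundaries --- and I would isolate them as lemmas describing how a weight-$w$ error and its syndrome propagate through one defect line and one point defect, then glue along the input Tanner graph. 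The remaining pieces --- substitution into \eqref{eq:prePSC}, the $L\le L_\bullet$ truncation of the entropy factor, and the exponent arithmetic --- are routine.
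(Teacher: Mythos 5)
Your theorem-level argument is exactly the paper's: plug the decoder energy barrier $m\ge cL$, the parameter counts $N=\Theta(L^3)$ and $k\le rL$, into the bound~\eqref{eq:PSCBound} (equivalently, \eqref{eq:prePSC} with the entropy factor truncated for $L\le L_\bullet$), then collect exponents. The paper's proof of Theorem~\ref{thm:1} is precisely this three-line substitution, deferring $m\ge cL$ to Lemma~\ref{Lem:1}. (Minor note: you correctly write $k\le rL$ for the rate constant; the paper's proof says $k(L)\ge rL$, but since $2^k$ sits in the numerator of \eqref{eq:PSCBound} the inequality actually used is $k\le rL$ --- an apparent slip in the paper, which you implicitly fixed.)

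Where you diverge is the inline sketch of the decoder energy-barrier claim, which the paper isolates as Lemma~\ref{Lem:1} and proves by a different route. The paper does not re-derive anything about syndrome branching at defects; it invokes the energy-barrier proof of Ref.~\cite{Williamson2023}, which already constructs, for any sequence of local errors $P_0\cdots P_j$ on the Layer Code with energy penalty $\epsilon$, a corresponding sequence of local errors on the input code with penalty $\epsilon'\le\frac{ww'}{4}\epsilon$. Crucially, the paper observes that this induced input-code error is \emph{exactly} the stage-3 QTCD correction returned by Algorithm~\ref{alg:CMD} on $P_0\cdots P_j$, so the multiplicative factor $\frac{4}{ww'}$ is inherited by the decoder for free; combining with linear confinement of good Quantum Tanner codes then gives $m\ge\frac{4\mu}{ww'}L$. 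Your sketch instead attempts a direct bookkeeping argument: $O(1)$ syndrome blowup per defect crossing per matching round, hence an $O(cL)$-weight effective input-code error after stage 3. This is a genuinely different decomposition, and you are right to flag its two weak points: (i) bounding the syndrome weight handed between rounds does not by itself bound the \emph{weight} (or logical class) of the residual error seen by the input-code decoder --- you need linear confinement, which you gesture at but do not invoke explicitly; and (ii) you must rule out the matchings themselves flipping a logical sector for below-barrier errors. Both are exactly what the paper's reduction to the Ref.~\cite{Williamson2023} energy-barrier proof sidesteps, since that proof already handles the defect combinatorics once and for all. Your route, if completed, would be more self-contained but requires re-proving that combinatorics; the paper's route is shorter precisely because it reuses it.
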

\begin{proof}
Ref.~\cite{Williamson2023} established that for Layer Codes based on good Quantum Tanner Codes, there exist an $r$ such that $k(L)\geq r L$. We also have that $N=\Theta(L^3)$. In Lemma~\ref{Lem:1} we show that there exists a $c>0$ such that $m(L)\geq c L$. Combining these bounds with Eq.~\ref{eq:PSCBound} we obtain the stated result.
\end{proof}
The constants $r,c,$ above quantify the encoding rate and relative energy barrier that are achievable by a family of good Quantum Tanner Codes, respectively. 
\begin{corollary}[Layer Code memory time]
The bound in Theorem~\ref{thm:1} immediately implies a memory time
    \begin{align}
    T_{mem} &\geq \Omega(e^{(a c \beta -r \log 2) L -3\log L }) \\
    &\approx \Omega(e^{a c \beta  L}),
    \end{align}
where we have taken $\beta,L \gg 1 $ on the second line to simplify the expression. 
Taking the system size to scale with inverse temperature as $L(\beta)=L_\bullet$ we find
    \begin{align}
    T_{mem \bullet} 
    &\geq \Omega\big(\exp\big[ ac \beta e^{\frac{1}{3}(1-a)\beta}\big]\big)
    \end{align}
    for $\beta \gg 1$. 
    
\end{corollary}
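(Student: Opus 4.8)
The plan is to convert the exponential decay bound of Theorem~\ref{thm:1} into a lower bound on the memory time by the standard device of asking when $\varepsilon(t)$ first reaches a fixed constant threshold $\varepsilon_0$ (say $\varepsilon_0 = 1/3$). Writing that bound as $\varepsilon(t)\le C\,t\,e^{-E(\beta,L)}$, with $C$ an absolute constant and $E(\beta,L):=(ac\beta-r\log 2)L-3\log L$ so that $-E$ reproduces the exponent $-(ac\beta-r\log 2)L+3\log L$ appearing in Theorem~\ref{thm:1}, the inequality $\varepsilon(t)\le\varepsilon_0$ is guaranteed for all $t\le(\varepsilon_0/C)\,e^{E(\beta,L)}$. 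Hence
\[
T_{mem}\ \ge\ \frac{\varepsilon_0}{C}\,e^{E(\beta,L)}\ =\ \Omega\!\big(e^{(ac\beta-r\log 2)L-3\log L}\big),
\]
which is the first displayed bound of the corollary. For this to be a genuinely growing lower bound one needs $E(\beta,L)>0$; since $(3\log L)/L\to 0$, this holds whenever $\beta>r\log 2/(ac)$, i.e. under the standing assumption $\beta\gg 1$.

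Second, I would pass to the asymptotic form $T_{mem}\approx\Omega(e^{ac\beta L})$, which is just the statement that in the regime $\beta,L\gg 1$ the exponent is dominated by $ac\beta L$: the term $-(r\log 2)L$ is a fixed fraction of the leading term that becomes relatively negligible as $\beta\to\infty$, and $-3\log L$ is negligible against $L$. To make the ``$\approx$'' precise one fixes $\delta\in(0,1)$ and checks that for $\beta$ and $L$ sufficiently large $E(\beta,L)\ge(1-\delta)\,ac\beta L$, giving $T_{mem}=\Omega(e^{(1-\delta)ac\beta L})$; the informal ``$\approx$'' in the corollary is this statement with $\delta$ suppressed.

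Third, I would substitute the maximal admissible system size $L=L_\bullet=O(e^{\frac13(1-a)\beta})$ into $T_{mem}\ge\Omega(e^{ac\beta L})$. Writing $L_\bullet=c'\,e^{\frac13(1-a)\beta}$, the exponent $ac\beta L_\bullet$ becomes $(acc')\,\beta\,e^{\frac13(1-a)\beta}$, and absorbing $c'$ into the $\Omega$ yields
\[
T_{mem\bullet}\ \ge\ \Omega\!\big(\exp\big[\,ac\beta\,e^{\frac13(1-a)\beta}\,\big]\big),
\]
the claimed double-exponential-in-$\beta$ scaling. Here one should confirm that $L=L_\bullet$ lies within the validity window of Theorem~\ref{thm:1} (it is the boundary case, and shrinking $c'$ by an arbitrarily small amount makes the use strictly valid) and that the positivity condition $E(\beta,L_\bullet)>0$ still holds for $\beta\gg 1$, which it does since $(3\log L_\bullet)/L_\bullet\to 0$.

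I do not anticipate a genuine obstacle here: once Theorem~\ref{thm:1} is in hand the argument is pure bookkeeping. The only points requiring care are (i) fixing the definition of $T_{mem}$, since the choice of threshold $\varepsilon_0$ only shifts the hidden constant; (ii) stating in what precise sense the subleading terms in the exponent are dropped in the ``$\approx$'' step; and (iii) checking that substituting $L=L_\bullet$ respects both the range of validity and the positivity of the exponent. If one wanted the strongest version, one could additionally optimize over $a\in(0,1)$ — the product $a\,e^{\frac13(1-a)\beta}$ is maximized at $a=3/\beta$ — but since the corollary keeps $a$ as a free constant this optimization is not needed for the stated result.
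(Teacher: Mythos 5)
Your proposal is correct and follows essentially the same reasoning as the paper: the paper treats the corollary as an immediate consequence of Theorem~\ref{thm:1} via the general bound of Eq.~\eqref{eq:TmemGeneral}, and your threshold argument ($T_{mem}$ as the first time $\varepsilon$ reaches a fixed constant) is precisely the standard device underlying that bound. The bookkeeping — dropping the $-(r\log 2)L$ and $-3\log L$ subleading terms for $\beta,L\gg 1$ and then substituting $L=L_\bullet$ — matches the paper's presentation, and your aside about optimizing $a$ is correctly flagged as unnecessary (and would anyway push $a$ outside the fixed-constant regime on which the entropy bound in Eq.~\eqref{eq:PSCBound} relies).
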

At this point, we have demonstrated that Layer Codes are partially self-correcting, under a thermalizing local noise model that satisfies the assumptions of Ref.~\cite{Bravyi2013}, provided that the decoder in Algorithm~\ref{alg:CMD} corrects errors up to a constant fraction of the energy barrier, see Lemma~\ref{Lem:1}. 
The partial self-correcting behaviour of Layer Codes is particularly strong in comparison to previous results. This is due to the optimal scaling of the energy barrier in Layer Codes. We now highlight several features of partial self-correction in Layer Codes. 
First, our partial-self correction result applies to a number of encoded qubits that scales linearly with $L$, which is the fastest possible growth given the code distance and spatial dimension. Previous results focused on a constant number of encoded qubits. 
Second, the memory time achieved in the partially self-correcting regime has a better scaling with $L$ than previous results. 
This, in turn, leads to improved scaling with $\beta$ when a sequence of system sizes $L(\beta)=L_\bullet$ are considered.  

Due to the fact that the energy barrier scaling of Layer Codes is optimal, the condition $L_\bullet= O(e^{\frac{1}{3} (1-a) \beta})$ is slightly conservative. 
Returning to Eq.~\eqref{eq:prePSC} we find a looser condition on the maximum system size $L_\bullet$ up to which the energy barrier provides some error suppression 
\begin{align}
   \log(1+e^{-(1-a)\beta}) vL_\bullet^3 \ll ( a c \beta -r \log 2 ) L_\bullet - 3\log L_\bullet.  \nonumber
\end{align}
Here, we have used that there exists a $v$ such that ${N \leq v L^3}$. 
We consider $\beta,L_\bullet \gg 1$ such that ${\log(1+e^{-(1-a)\beta})}\approx e^{-(1-a)\beta}$ and so that we can safely neglect $3L_\bullet^{-1} \log L_\bullet$ and $(ac \beta)^{-1} r\log 2$ on the right of the inequality, then 
\begin{align}
   e^{-(1-a)\beta} vL_\bullet^3 &\ll  a c \beta  L_\bullet  \\
   L_\bullet&\ll \sqrt{\frac{a c \beta }{v}}e^{\frac{1}{2}(1-a)\beta} .
\end{align}
Now, scaling the maximum system size with inverse temperature results in a memory time that scales as
\begin{align}
    T_{mem \bullet} &\geq \Omega(e^{a c \beta L_\bullet }) \\
    &\geq \Omega\big(\exp\big[{\sqrt{{v^{-1}(a c)^3}} \beta{^\frac{3}{2}} e^{\frac{1}{2}(1-a)\beta} }\big]\big)
\end{align}
for $\beta \gg 1$. This is a slight improvement on the more conservative estimates given above.


\begin{figure}[t]
    \centering
    \includegraphics[page=1]{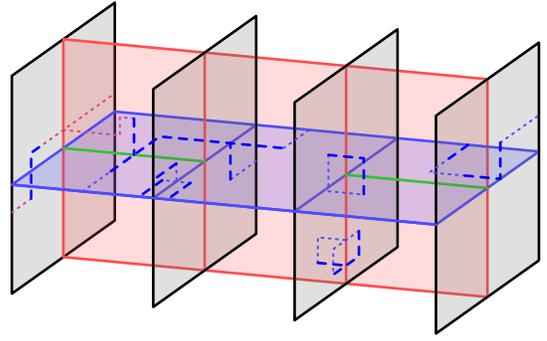}
    \caption{A depiction of an $X$-type error (dashed lines) and a correction operator (dotted lines) on the [[4,2,2]] Layer Code. }
    \label{fig:422error}
\end{figure}

\begin{figure}[t]
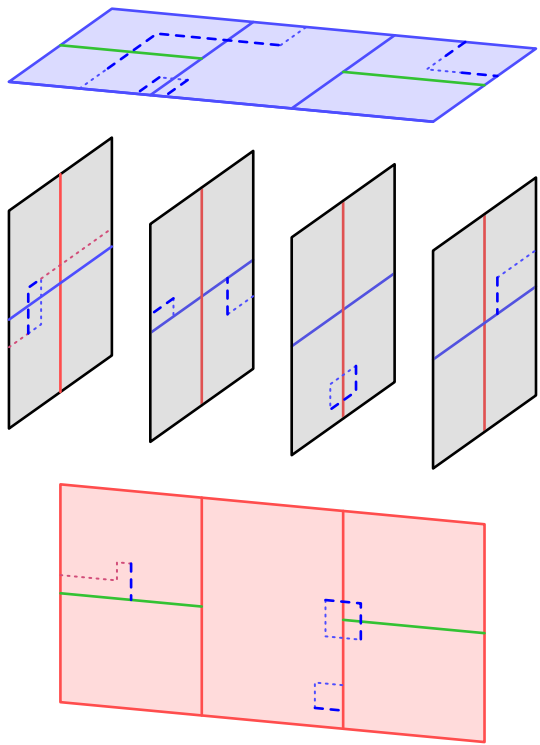

    \centering
    \includegraphics[page=2]{Figures} \\
    \includegraphics[page=3]{Figures} \\ 
    \includegraphics[page=4]{Figures}
    \caption{A depiction of the multistage decoder applied to the $X$-type error (dashed lines) on the [[4,2,2]] Layer code in Fig.~\ref{fig:422error}. The errors and corrections (dotted lines) applied at each step can create further syndromes in the subseqeunt steps due to the properties of the defects in Fig.~\ref{fig:Defects}. 
    In the first step, the syndromes (endpoints of dashed lines) on the blue layer are matched using rough boundary conditions on all sides. 
    At the second step, the syndromes in the grey layers (including any from the first correction) are matched (blue dotted lines) using standard surface code boundary conditions. 
    In the third step the parity of syndromes (including any from previous corrections) on the red layer is computed and the input decoder is run on this effective syndrome. The correction returned by the input decoder is applied by flipping the logical sector of the matching on the layers that correspond to qubits flipped by the correction operator. In this example the red layer has negative parity, and so a correction operator $X_0$ is returned. This flips the matching on the leftmost layer (purple dotted lines) to restore even parity on the red layer. 
    In the fourth, final, step, the syndromes on the red layer (including those from previous corrections) are matched assuming smooth boundary conditions (dotted lines). }
    \label{fig:Decoding}
\end{figure}

\begin{figure}[t]
\begin{algorithm}[H]
\caption{Concatenated Matching Decoder}
\label{alg:CMD}
\begin{algorithmic}
    \Require A \textbf{Layer Code} $\mathcal{C}$ based on an input from a family of good Quantum Tanner Codes. 
    \textbf{Syndrome locations} ${S(E)=\{s_i\}}$ created by an $X$-type error $E$. 
    $\textbf{MWPM}$, a minimum-weight perfect-matching decoder that takes in $Z$-check syndrome locations on the surface code and returns a minimum-weight $X$-type correction operator. 
    $\textbf{MWPM}^{-}$, a variant of $\textbf{MWPM}$ that returns a minimum-weight correction in the inequivalent logical sector. 
    $\textbf{QTCD}$, a linear-time decoder for a family of good Quantum Tanner Codes that takes in $Z$-check syndrome locations and returns a list of grey layers corresponding to qubits in the support of an $X$-type correction operator on the input code~\cite{Leverrier2022Efficient,Dinur2022Good,Gu2022An,Leverrier2022A}.
    \vspace{.1cm}
    \Ensure An $X$-type correction operator $R$ that returns $\mathcal{C}$ to the code space.
    \vspace{.1cm}

    \State $R \gets \idty$
    \State $\xi \gets S(E)$
    \State $\sigma(\ell_r)\gets 0$

    \ForEach{blue layer $\ell_b$ in $\mathcal{C}$} 
    \Comment{Parallelizable, stage 1}
    \State $R \gets R \cdot \textbf{MWPM} (\xi \cap \ell_b)$.
    \Comment {Update correction}
    \State $\xi \gets S(E)\cdot S(R) $
    \Comment {$\mathbb{Z}_2$ syndrome update}
    \EndFor
    
    \ForEach{grey layer $\ell_g$ in $\mathcal{C}$} 
    \Comment{Parallelizable, stage 2}
    \State $R \gets R \cdot \textbf{MWPM} (\xi \cap \ell_g)$.
    \State $\xi \gets S(E)\cdot S(R) $
    \EndFor
    
    \ForEach{red layer $\ell_r$ in $\mathcal{C}$} 
    \Comment{Parallelizable, stage 3}
    \State $\sigma(\ell_r)\gets |\xi\cap \ell_r| \mod 2$
    \Comment{Compute parity}
    \EndFor

    \ForEach{grey layer $\ell_g$ in $\textbf{QTCD}(\sigma)$} 
    \Comment{Parallelizable}
    \State $R \gets R \cdot \textbf{MWPM} (\xi \cap \ell_g) \cdot 
    \textbf{MWPM}^- (\xi \cap \ell_g)$
    \State $\xi \gets S(E)\cdot S(R) $
    \EndFor

    \ForEach{red layer $\ell_r$ in $\mathcal{C}$} 
    \Comment{Parallelizable, stage 4}
    \State $R \gets R \cdot \textbf{MWPM} (\xi \cap \ell_r)$
    \State $\xi \gets S(E)\cdot S(R) $
    \EndFor

\end{algorithmic}
\end{algorithm}
\end{figure}

\textit{Concatenated matching decoder.}---
We now describe a concatenated matching decoder for Layer Codes, and argue that it decodes up to a constant fraction of the energy barrier for good Quantum Tanner Code inputs. 
We focus on $X$-type errors, which create $m$ syndromes. The correction of $Z$-type errors proceeds similarly, with the role of $X$-check and $Z$-check layers reversed. 

Our concatenated matching decoder is based on the intuition that it should be possible to decode the individual surface code layers that make up a Layer code. Due to the branching of syndromes at defect lines, see Fig.~\ref{fig:422error}, applying a correction operator to one of the surface code layers can effect the layers that intersect it on defect lines. 
Hence, we need a sequential approach that decodes collections of disjoint layers in each step, taking into account the decoder outcomes from previous steps, similar to Refs.~\cite{Brown2019Parallel,Lee2025Color,Schwartzman-Nowik2025}.

Due to the structure of the defects, there is an asymmetry in the effect of applying a correction. 
For an $X$-type correction, blue layers can be corrected up to branching new syndromes into the grey and red layers. 
Grey layers can be corrected up to branching new syndromes into the red layers. 
There are two sectors of logically inequivalent corrections on each grey layer, due to the standard surface code boundary conditions.
A red layer can only be corrected if its total parity of $m$ syndromes is even. This correction does not branch defects into the blue or grey layers. 
The set of red layers with odd parity is equivalent to a syndrome of the $Z$-type checks in the input code. 
We can find an $X$-type correction for this syndrome on the input code using a decoder for that code. 
To ensure the even parity condition is satisfied on all red layers, we flip the logical sector of the surface code corrections on a set of grey layers. This set corresponds to the qubits in the support of the input code correction. 
It is then possible to correct the red layers, leaving behind no syndromes. 

Since the surface code layers of the same color do not intersect, it is possible to decode them in parallel. 
This leads to a four-stage decoder, depicted in Fig.~\ref{fig:Decoding}: First, the blue surface code layers are decoded and corrected in parallel. Second, the grey layers are decoded and corrected in parallel. Third, the parity of $m$-syndromes on the red layers is computed in parallel and a decoder for $X$-type errors on the input code is called, returning a correction. This is used to flip the logical sector of the correction on each grey layer that corresponds to a qubit in the support of the input code correction. 
Fourth, the red layers are decoded in parallel. At this point the decoder terminates and a full correction has been found.

In Algorithm~\ref{alg:CMD} we describe a specific implementation of the four-stage decoder based on minimum-weight perfect-matching for the surface code layers, and the linear time decoder for good Quantum Tanner Codes from Refs.~\cite{Leverrier2022Efficient,Dinur2022Good,Gu2022An,Leverrier2022A}. 
Here, $S(P)$ denotes the syndrome locations created by an $X$-type pauli operator $P$. 
We remark that the minimum-weight perfect-matching decoder only needs to be called once per grey layer if the result is stored. 
Similarly, the Quantum Tanner Code decoder only needs to be called once and stored. 
The syndrome update steps can be performed in an efficient manner by only modifying the relevant syndromes at each stage.

We now make several remarks about the concatenated matching decoder. 
First, all Pauli correction operators can be implemented classically via Pauli frame tracking. 
Second, the red layer parities computed at stage 3 must correspond to a valid syndrome configuration of the input code, i.e.~a syndrome that satisfies all meta-checks (materialized symmetries)~\cite{kitaev2003fault}. This is because the meta-checks map faithfully from the input code to the Layer Code~\cite{Williamson2023}. 
Third, after the correction operator is applied in stage 3 the parity on each $Z$-check layer is guaranteed to be neutral~\cite{Leverrier2022Efficient,Dinur2022Good,Gu2022An,Leverrier2022A}. Once a minimum-weight correction is returned by $\textbf{MWPM}$, the minimum-weight correction in the inequivalent logical sector $\textbf{MWPM}^-$ can be computed by forcing the parity of matchings to each smooth boundary to be the opposite of those found by $\textbf{MWPM}$. 

Algorithm~\ref{alg:CMD} is highly parallelizable, and has the potential to produce a fast decoder with time complexity
\begin{align}
    t_{\text{CMD}} = O (t_{\text{MWPM}}+t_{\text{QTCD}}) .
\end{align}
The good Quantum Tanner Code decoder has linear time complexity~\cite{Leverrier2022Efficient,Dinur2022Good,Gu2022An,Leverrier2022A} and hence $t_{\text{QTCD}}=O(L)$.
If an implementation of minimum-weight perfect-matching is used that achieves $t_{\text{MWPM}} = O(L^3)$, then the overall decoding time is (sub) linear. 
In particular, the sparse-blossom algorithm was observed to have scaling $t_{\text{MWPM}}=O(L^{2(1+\epsilon)})$ for $0 \leq\epsilon\leq .32$ in Ref.~\cite{Higgott2025Sparse}. This leads to an expected \textit{sublinear scaling} of the time complexity ${t_{\text{CMD}} = O(n^{\frac{2}{3}(1+\epsilon)})}$, where $n=\Theta(L^3)$.

\textit{Decoder energy barrier.}---We now present the main technical result of this work: The energy barrier of the concatenated matching decoder is proportional to the energy barrier of the decoder called for the input qLDPC code. When combined with the linear energy barrier of \textbf{QTCD}, this establishes that the concatenated matching decoder achieves a constant fraction of the energy barrier for the associated Layer Codes. 
\begin{lemma}[Concatenated matching decoder energy barrier]
\label{Lem:1}
    For Layer Codes based on a family of good Quantum Tanner Codes, Algorithm~\ref{alg:CMD} successfully decodes any sequence of Pauli errors with energy penalty up to a constant fraction of the energy barrier. That is, there exists a constant $c>0$ such that $m=c L$. 
\end{lemma}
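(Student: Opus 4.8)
The plan is to show that the decoder of Algorithm~\ref{alg:CMD} succeeds on every sequence of single-qubit $X$-errors whose partial products $E_1,\dots,E_T=E$ all have syndrome weight below $m:=cL$, for a small constant $c>0$ to be fixed at the end (the $Z$-type case being symmetric under interchanging the roles of the $X$- and $Z$-check layers). Here ``succeeds'' means the returned correction $R$ obeys $RE\in\mathcal{S}$, so that no logical error is introduced, and since the Layer Code energy barrier is $\Theta(L)$ this choice of $m$ is a constant fraction of it. I would track the residual operator $ER$ through the four stages of the decoder and show it has been reduced to a Layer Code stabilizer at the end.

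First I would invoke the defect-branching analysis of Ref.~\cite{Williamson2023} that underlies the $\Theta(\Delta)$ energy barrier. Up to multiplication by Layer Code stabilizers, $E$ can be put in a normal form $E=\bar e\cdot F$, where $\bar e$ records, on each grey layer, the logical class of $E$ restricted to that layer --- equivalently, $\bar e$ is an $X$-type error on the input Quantum Tanner Code supported on the grey layers where this class is nontrivial --- and $F$ consists of surface-code-level strings that are logically trivial within each individual layer. The branching rules then give that a path of syndrome weight $<m$ for $E$ in the Layer Code descends to (i) a path of syndrome weight $O(m)$ for $\bar e$ in the input code, the branched syndromes on the blue and red layers realizing the partial input-code syndromes, and (ii) per-layer paths of syndrome weight $O(m)$ for the pieces of $F$. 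Choosing $c$ so that $O(cL)$ falls below both the decoding radius of \textbf{QTCD}, a constant fraction of the input code distance $\Theta(L)$, and half the distance $\Theta(L)$ of the surface-code patches, ensures that every sub-decoder invoked by Algorithm~\ref{alg:CMD} operates within its correction guarantee.

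Next I would carry out the stage-by-stage bookkeeping. Blue layers carry only smooth $m$-condensing boundaries, hence support no protected $X$-logical, so stage~1 removes every blue-layer syndrome with \textbf{MWPM} while branching an $O(m)$-weight syndrome into the intersecting grey and red layers along defect lines. Stage~2 clears each grey layer of surface-code syndrome with \textbf{MWPM}, leaving only a binary logical sector per grey layer, and by construction the set of grey layers assigned the ``wrong'' sector relative to the normal form is exactly $\mathrm{supp}(\bar e)$, up to input-code stabilizers. By the faithful correspondence between the meta-checks of the input code and those of the Layer Code~\cite{Williamson2023}, the red-layer parity vector $\sigma$ computed in stage~3 equals the input-code syndrome $\sigma(\bar e)$, which is in particular valid; since $|\bar e|$ lies within the decoding radius of \textbf{QTCD}, the latter returns $\bar r$ with $\sigma(\bar r)=\sigma(\bar e)$ and $\bar r\,\bar e\in\mathcal{S}_{\mathrm{input}}$, so flipping the logical sector of the grey layers in $\mathrm{supp}(\bar r)$ via the paired calls to \textbf{MWPM} and \textbf{MWPM}$^{-}$ corrects every grey layer to the class of the normal form and, through defect branching, makes every red-layer parity even. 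Stage~4 then clears the red layers --- which support no protected $X$-logical once the parity obstruction at the defects is removed --- with \textbf{MWPM}. The net operator $ER$ now has trivial syndrome everywhere and trivial logical class on each layer, so it is a Layer Code stabilizer, as required.

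I expect the main obstacle to be the middle of this bookkeeping: controlling how much syndrome is branched at each stage so that the next sub-decoder stays inside its radius, and certifying that \textbf{QTCD} returns a correction in the coset of $\bar e$. For the first point I would bound the weight of each \textbf{MWPM} correction on a layer by the syndrome weight accumulated on that layer along the induced path, and use the bounded degree of the input Tanner graph --- equivalently, the bounded number of layers meeting at any defect line --- to bound the branched syndrome by a constant multiple, then chain this estimate through all four stages; this is the technical heart of the argument, and it is where the explicit defect rules of Fig.~\ref{fig:Defects} must be used in detail. For the second point, the only danger is that $\bar e$, while reachable by a low-energy path, has weight exceeding the decoding radius of \textbf{QTCD}; the normal-form reduction is designed precisely to avoid this, since an error with a sub-$\Delta_{\mathrm{input}}$ energy path lies in the trivial logical coset and therefore has a representative of weight $O(m)$ within that coset, on which \textbf{QTCD} is guaranteed to succeed. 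Fixing $c$ to meet all the inequalities collected above then yields $m=cL$ and completes the argument.
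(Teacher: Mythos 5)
Your proposal follows essentially the same strategy as the paper: reduce the Layer Code error sequence to an error sequence on the input Quantum Tanner Code, bound the blow-up in syndrome weight caused by defect branching, and then invoke the decoding guarantee of $\textbf{QTCD}$. The "technical heart" you correctly identify---bounding the branched syndromes by a constant multiple chained through the stages---is exactly what the paper delegates to the energy-barrier proof of Ref.~\cite{Williamson2023}, which supplies the explicit multiplicative constant $\frac{w w'}{4}$ (weight $w$ and degree $w'$ of the input Tanner graph). The paper's sequence-to-sequence map is defined operationally: the input-code error at step $j$ is whatever $\textbf{QTCD}$ would output if Algorithm~\ref{alg:CMD} were run on the partial product $P_0\cdots P_j$; your normal-form decomposition $E=\bar e\cdot F$ is a closely related packaging of the same idea.

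There is one place where your reasoning has a genuine gap. You write that ``an error with a sub-$\Delta_{\mathrm{input}}$ energy path lies in the trivial logical coset and therefore has a representative of weight $O(m)$ within that coset.'' The first half is correct, but the second half does not follow: being in the trivial coset and being reachable by a low-energy path do not, on their own, imply the existence of a low-weight representative---a high-weight stabilizer, for instance, satisfies both hypotheses. What closes this gap is the \emph{linear confinement} property of good Quantum Tanner Codes (Ref.~\cite{leverrier2022quantum}): for errors below a constant fraction of the distance, syndrome weight lower-bounds a constant fraction of the reduced weight, so a sequence whose intermediate syndrome weights stay below $O(cL)$ produces a final error whose reduced weight is also $O(cL)$, well inside the decoding radius of $\textbf{QTCD}$. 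The paper's proof explicitly invokes this confinement property at exactly this point; you should too, rather than crediting the normal-form reduction with a property it does not by itself provide.
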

\begin{proof}
    The energy barrier proof in Ref.~\cite{Williamson2023} established that any sequence of local errors $\{P_0P_1\cdots P_j\}_{j=0,\dots,\tau}$ with energy penalty $\epsilon$ in a Layer Code maps to a sequence of local errors on the input qLDPC code $\{P'_0P'_1\cdots P'_j\}_{j=0,\dots,\tau}$ with energy penalty $\epsilon'$. 
    The error on the input code at step $j$ is found by running Algorithm~\ref{alg:CMD} on the Layer Code with error $P_0P_1\cdots P_j$ and returning the correction operator produced by \textbf{QTCD} at the end of stage 3. 
    In the worst case, this mapping leads to a constant multiplicative increase in the energy penalty of the sequence $\epsilon'\leq \frac{w w' }{4}\epsilon$. Here, $w$ is the weight and $w'$ is the degree of the input code. Hence, the concatenated matching decoder achieves a $\frac{4}{w w'}$ fraction of the energy barrier achieved by \textbf{QTCD}. 

    The good Quantum Tanner Codes exhibit linear confinement around any code state, for errors of weight up to a constant fraction of the distance~\cite{leverrier2022quantum}. This implies that for any sequence of local errors whose energy penalty is a sufficiently small constant fraction of the energy barrier $\Delta_{\text{QTC}}$, the weight of the error is sufficiently small to be corrected by the decoder in Refs.~\cite{Leverrier2022Efficient,Dinur2022Good,Gu2022An,Leverrier2022A}. That is, \textbf{QTCD} achieves a constant fraction of the energy barrier $m_{\text{QTC}}\geq  \Omega( \Delta_{\text{QTC}})$. For good Quantum Tanner Codes $\Delta_{\text{QTC}}=\Omega(n)$, and hence there exists a constant $\mu>0$ such that $m_{\text{QTC}}\geq \mu n$. 
    Combining this with the bound in the paragraph above, we find that the concatenated matching decoder achieves a constant fraction of the energy barrier $m\geq \frac{4 \mu}{w w'} L$ when applied to a sequence of local Pauli operators on Layer Codes based on a family of Good Tanner Codes. 
\end{proof}

In the appendix we show that the concatenated matching decoder corrects up to a constant fraction of the layer code distance. 

\textit{Discussion.}---In this work we have demonstrated that Layer Codes based on a family of good Quantum Tanner Codes are partially self-correcting while weakly coupled to a thermal bath, at a sufficiently low temperature, for some time followed by decoding with a concatenated matching decoder. Furthermore, we established that Layer Codes exhibit a strong form of partial self-correction which protects $\Omega(L)$ logical qubits for a time $\Omega(e^{a c \beta  L})$ up to a temperature dependent system size $L_\bullet= O(e^{\frac{1}{3} (1-a) \beta})$. 
Our key technical result was a proof that the concatenated matching decoder achieves a constant fraction of the energy barrier. 
In the appendix we also show that this decoder achieves a constant fraction of the distance. 

Several future directions for extensions of this work present themselves.
First, the memory time is potentially longer than our predictions as we have used an upper bounded on the storage error. While we do not expect true self-correction~\cite{Baspin2025The}, what is the best achievable scaling of the memory time and maximum system size for partial self-correction?
Second, our decoder is fast, with expected sublinear time complexity. It is an interesting challenge to prove this, potentially after replacing minimum-weight perfect matching with a provably fast decoder~\cite{Delfosse2021Almost}. 
This would address an open question from Ref.~\cite{Eggerickx2025Almost}.
Third, numerical simulations are desirable to paint a more detailed picture of our partial self-correction results. However, this is a challenge due to the large size of the input good Quantum Tanner Code families. One approach is to consider families of good qLDPC codes with smaller instances as inputs to the Layer Code construction. A concurrent work, Ref.~\cite{Gu2025Layer}, instead focused on simulating Layer Codes based on random good dense code inputs. The flexibility of these codes allowed the authors of Ref.~\cite{Gu2025Layer} to perform numerical simulations, which produced results consistent with their analytical demonstrations of partial self-correction. 

The problem of finding a self-correcting memory at finite temperature in three dimensions remains open. If such a memory can protect many logical qubits, as our partially self-correcting Layer Code memory does, it is natural to also ask: Is a universal self-correcting quantum computer at finite temperature possible in three-dimensions~\cite{Bombin2013Self}?

\vspace{.5cm}
\textit{Acknowledgements.}---  
The author acknowledges inspiring discussions with Shouzhen (Bailey) Gu and Libor Caha on their related work which appeared in a recent arxiv posting~\cite{Gu2025Layer}. 
The author also acknowledges productive collaboration with Nou\'edyn Baspin during the early stages of this project. 
Part of this work was done during the ``Logical Gates for Encoded Qubits'' workshop at the Yukawa Institute for Theoretical Physics, and the ``Noise-robust Phases of Quantum Matter'' program at the Kavli Institute for Theoretical Physics. 
DJW is supported by the Australian Research Council Discovery Early Career Research Award (DE220100625). This research was supported in part by grant NSF PHY-2309135 to the Kavli Institute for Theoretical Physics (KITP).

\bibliography{references.bib}

\appendix

\section{Review of the general setting}

In this section we introduce some background material about the general setting for this work, following Ref.~\cite{bravyi2013classification}.

The standard Hamiltonian associated to a CSS stabilizer code $\mathcal{C}$ is
\begin{align}
    H_{\mathcal{C}} = - \sum_{i} \frac{1}{2}(\idty-c_i^X) - \sum_{j} \frac{1}{2}(\idty-c_j^Z),
\end{align}
where the sums run over all $X$-type code checks $c_i^X$ and all $Z$-type code checks $c_j^Z$. 
The Hamiltonian is normalized to have the code-space of $\mathcal{C}$ as its zero-energy ground-space. 
Excitations of the Hamiltonian each cost one unit of energy, and correspond to syndromes of the code $\mathcal{C}$. We remark that the overall energy scale of the Hamiltonian can be absorbed into a redefinition of the temperature, and so we set it to one for simplicity. 
The degeneracy of the ground space is $2^k$, and the same is true for all valid syndrome configuration spaces, i.e. those satisfying all meta-checks. 
All families of topological codes (a code with local stabilizer generators that corrects all local errors) give rise to reasonable Hamiltonians that are stable to local perturbations~\cite{Bravyi2010Stability,Bravyi2011A,Lavasani2024On,DeRoeck2024LDPC,Yin2025Low}, this includes the Layer Codes. 

The energy penalty of a Pauli operator $P$ is the number of syndromes violated by that operator. 
For a sequence of Pauli-$X$ operators $\{P_0P_1\cdots P_j\}_{j=0,\dots,\tau}$ the energy penalty is the maximum energy penalty for any intermediate operator $P_0P_1\cdots P_j$.
The energy barrier of an operator $E$ is the minimum energy penalty of any sequence that satisfies $E=P_0P_1\cdots P_\tau$. 
The energy barrier of a code is the minimum energy barrier of any logical operator representative in that code.
For any topological code defined on a $D$-dimensional hypercubiod lattice of size $\Theta(L^D)$ the energy barrier is upper bounded by $O(L^{D-1})$, here $L$ is known as the linear system size. This is because all logical operators can be cleaned into codimension-1 regions~\cite{bravyi2009no}. 

In this work we follow Ref.~\cite{Bravyi2013} and consider dynamics of an initial state $\rho(0)$ in the ground space of a stabilizer Hamiltonian, which then evolves under that Hamiltonian coupled to a thermal bath in the Davies weak coupling limit. This results in a Markovian master equation
\begin{align}
    \frac{d}{d t}{\rho}(t) = -i[H,\rho(t)]+\mathcal{L}\big(\rho(t)\big), 
\end{align}
for $t\geq 0,$ where $\mathcal{L}$ is the Lindblad generator. 
The superoperator $\mathcal{L}$ is defined by Hermitian operators $\{ A_\alpha\}$ that describe the coupling of the system to the bath. In local many-body quantum dynamics, as considered in this work, the operators $A_\alpha$ are assumed to have support on a constant number of qubits. 
A simple example is a quantum analog of Glauber dynamics in which the system-bath coupling operators are taken to be the set of all single qubit Pauli $X$ and $Z$ operators. 

The system-bath coupling operators are expanded into Bohr frequencies $A_\alpha = \sum_{\omega} A_{\alpha,\omega}$, where $A_{\alpha,\omega}$ is the component of $A_\alpha$ that maps eigenvectors with energy $E$ to eigenvectors with energy $E-\omega$. We now write the Lindblad generator in this basis 
\begin{align}
    \mathcal{L}(\rho) = \sum_{\alpha,\omega} h(\alpha,\omega)\big(A_{\alpha,\omega} \rho A_{\alpha,\omega}^\dagger - \frac{1}{2} \{ \rho, A_{\alpha,\omega}^\dagger A_{\alpha,\omega} \} \big), \nonumber
\end{align}
where $h(\alpha,\omega)$ describes the rate of quantum jumps caused by $A_\alpha$ that transfer energy $\omega$ from the system to the bath. 
For a thermal bath at inverse temperature $\beta$, the rates must satisfy detailed balance
\begin{align}
    h(\alpha,-\omega)=e^{-\beta \omega} h(\alpha,\omega),
\end{align}
This ensures that the thermal Gibbs state is a steady state, $\mathcal{L}(\frac{1}{Z}e^{-\beta H})=0$, which is unique under an ergodicity condition.

The results of Ref.~\cite{Bravyi2013}, and our results here,  require $\| A_\alpha \| \leq 1$ and $\max h(\alpha,\omega) =O(1).$  They further require an upper bound on $f$ the energy penalty of the Pauli operators that appear in a Pauli basis expansion of the jump operators. Here, $f$ is constant, $f=O(1)$,  as each $A_{\alpha,\omega}$ is local, which follows from the locality assumed of the $A_{\alpha}$ operators and the checks in the stabilizer Hamiltonian. 
We have suppressed this subtlety in the main text for simplicity of presentation. Technically, the required decoder energy barriers we have quoted in the main text should be increased by an $O(1)$ additive constant to $m+2f$. This does not change their scaling behaviours. 

The action of the measurement and decoding procedure that is applied at the end of the quantum memory's time evolution can be described by a completely positive trace preserving map of the form
\begin{align}
    \Phi_{ec} = \sum_S R(S) P_{ec}(S) \Pi_S \rho \Pi_S P_{ec}(S). 
\end{align}
Here, the sum is over all syndromes $S$. The operator $\Pi_S$ is the projector onto the $S$ syndrome subspace, and $P_{ec}(S)$ is the Pauli correction operator returned by the decoder applied to syndrome $S$. In Algorithm~\ref{alg:CMD} $P_{ec}(S)$ corresponds to the variable $R$. 

Finally, we explain the derivation of Eq.~\eqref{eq:prePSC} in more detail. 
In Section~V.B of Ref.~\cite{Bravyi2013} a bound on the memory error is derived. We do not reproduce the proof here. This leads to the first inequality below. 
\begin{align}
    \varepsilon (t) :&= \| \rho(0) - \Phi_{\text{ec}}\big(\rho(t)\big) \|
    \\
    &\leq O(tN) \text{tr} ( Q_m e^{-\beta H} ) \\
    &\label{eq:App2}
    \leq O(tN) 2^k \sum_{n\geq m}  
    \begin{pmatrix} N \\ n \end{pmatrix} e^{-\beta n} \\
    &\label{eq:App3}
    \leq O(tN) 2^k e^{- a \beta m} \sum_{n\geq m}  
    \begin{pmatrix} N \\ n  \end{pmatrix} e^{-\beta n+a\beta m} \\
    &\label{eq:App4}
    \leq O(tN) 2^k e^{- a \beta m}\sum_{n\geq m} 
    \begin{pmatrix} N \\ n \end{pmatrix} e^{-(1-a)\beta n} \\
    &\label{eq:App5}
    \leq O(tN) 2^k e^{- a \beta m} ( 1+ e^{-(1-a)\beta} )^N .
\end{align}
To reach Eq.~\eqref{eq:App2} we note that there are $N$ possible syndrome locations, we expand the trace out into a sum over energy eigenspaces of $H$ with energy at least $m$. The eigenspace with energy $n\geq m$ contains at most $\begin{pmatrix} N \\ n \end{pmatrix}$ distinct syndrome subspaces, each with degeneracy $2^k$. The trace of $e^{-\beta H}$ on the eigenspace with energy $n$ is simply $e^{-\beta n}$. The remaining steps are simple manipulations, using $n\geq m$, and noting the sum in Eq.~\eqref{eq:App4} is a partial expansion of a binomial to the $N$th power which is upper bounded by the full binomial to the $N$th power.

\section{Concatenated Matching Decoder Distance}

In this section we show that the concatenated matching decoder achieves a constant fraction of the layer code distance. For this we make a small modification to the Layer Code construction, which produces equivalent codes. We consider \textit{extended Layer Codes} where all blue $X$-check layers and red $Z$-check layers are extended to the boundaries of the hypercuboid spanned by the grey data layers. This is achieved by expanding the trivalent defects where blue (red) layers end into a four-valent defect and a smooth (rough) boundary and then stretching the boundary to the edge of the hypercubiod, see Fig.~\ref{fig:Boundaries}. This transformation introduces additional qubits, but  does not change the parameter scaling or the Layer Codes. 
\begin{lemma}[Concatenated matching decoder distance]
\label{Lem:2}
    For extended Layer Codes based on a family of good Quantum Tanner Codes, Algorithm~\ref{alg:CMD} decodes any Pauli error with weight up to a constant fraction of the code distance. That is, there exists a constant $\alpha>0$ such that the decoder achieves distance $\alpha L^2$. 
\end{lemma}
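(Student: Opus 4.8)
\emph{Proof plan.} The plan is to mirror the energy-barrier argument of Lemma~\ref{Lem:1}, but to track \emph{weight} rather than energy penalty, exploiting the fact that in the Layer Code each input-code qubit is inflated into a surface-code logical string of weight $\Theta(L)$. Recall from Ref.~\cite{Williamson2023} that an extended Layer Code built from a good Quantum Tanner Code has distance $d_{\mathrm{Layer}}=\Omega(d_{\mathrm{QTC}}\,n)=\Omega(L^2)$; fix a constant $D_0>0$ with $d_{\mathrm{Layer}}\ge D_0 L^2$, and note each surface-code patch in the extended construction is $\Theta(L)\times\Theta(L)$, so $d_{\mathrm{surf}}=\Theta(L)$. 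It suffices to produce a constant $\alpha>0$ such that for every $X$-type error $E$ with $|E|\le\alpha L^2$ (the $Z$-type case being symmetric), Algorithm~\ref{alg:CMD} returns a correction $R$ with $E\cdot R\in\mathcal S$.

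First I would check that the four stages always produce a valid (syndrome-removing) correction provided the stage-3 call to \textbf{QTCD} succeeds: on a surface-code patch with the clean boundary conditions afforded by the extended construction, \textbf{MWPM} can match any syndrome configuration, and by the remarks following Algorithm~\ref{alg:CMD} the red-layer parities at stage~3 form a valid $Z$-check syndrome of the input code, which \textbf{QTCD} neutralises, so the stage-4 red layers all have even $m$-parity. Hence the decoder can fail only if either (i) \textbf{QTCD} is handed a syndrome outside its correction radius, or (ii) it terminates but $E\cdot R$ is a nontrivial Layer Code logical.

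The heart of the proof is a weight-bookkeeping lemma propagated through the stages with the branching rules of Fig.~\ref{fig:Defects}. In stage~1 each blue layer carries $O(|E|)$ worth of $m$-syndromes (those created directly by $E$ together with any branched in at $t=0$), so the stage-1 \textbf{MWPM} corrections have total weight $O(|E|)$ and branch at most $O(|E|)$ further syndromes into the grey and red layers; the same accounting in stage~2 gives grey-layer corrections of total weight $O(|E|)$ branching $O(|E|)$ syndromes into the red layers. The key point is that \textbf{MWPM} matches a grey (data) layer $\ell$ into the \emph{nontrivial} surface-code sector only if the residual error entering stage~2 on $\ell$ has weight at least $d_{\mathrm{surf}}/2=\Omega(L)$; since the total residual weight over all grey layers is $O(|E|)$, at most $O(|E|/L)$ data layers end up in a ``bad'' sector. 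Therefore the effective input-code error $\bar E$ defined by these sector flips has weight $|\bar E|=O(|E|/L)$, and choosing $\alpha$ small makes $|\bar E|\le\rho\,d_{\mathrm{QTC}}$, the linear-confinement radius of the good Quantum Tanner Code; this rules out (i), and \textbf{QTCD} returns some $E''$ with $S(E'')=S(\bar E)$ and, since $|\bar E|+|E''|<d_{\mathrm{QTC}}$, with $[E'']=[\bar E]$ in the input-code homology.

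Finally I would assemble $E\cdot R\in\mathcal S$ layer by layer. After stage~3 the sector of every grey layer agrees with the one dictated by the homologically correct input-code correction, so $E\cdot R$ restricted to each grey layer is a contractible loop; after stage~4 the red layers are syndrome-free and, carrying residual weight $O(|E|)<d_{\mathrm{surf}}$, $E\cdot R$ is a product of that patch's stabilisers there, and likewise on the blue layers. Since $E\cdot R$ is syndrome-free, acts as a product of surface-code stabilisers on every layer, and has input-code shadow in class $[\bar E]-[E'']=0$, it cannot be a nontrivial Layer Code logical (whose shadow would be a nontrivial input-code logical), which excludes (ii). Collecting the constants $\rho$, $d_{\mathrm{QTC}}/n$, $d_{\mathrm{surf}}/L$, and the input-code weight and degree $w,w'$ governing the branching yields an explicit $\alpha>0$, so the decoder achieves distance $\alpha L^2$. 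The main obstacle I anticipate is the weight-bookkeeping lemma: making rigorous, through all three decoding stages, that a ``bad'' sector on a data layer is always paid for by $\Omega(L)$ units of residual weight \emph{even after} branched syndromes from earlier stages are included requires a careful geometric analysis of the defect network of the extended Layer Code, analogous to but more delicate than the energy-penalty tracking of Ref.~\cite{Williamson2023}.
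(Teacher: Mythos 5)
Your proposal takes the same overall route as the paper: bound the number of layers that can land in a nontrivial logical sector after MWPM by observing that each such layer requires $\Omega(L)$ units of error weight, show that for $|E| \le \alpha L^2$ the effective input-code error handed to \textbf{QTCD} is within its correction radius, then assemble $E\cdot R$ into a stabilizer. The weak point you flag is exactly the right one: the claim that ``the stage-1 \textbf{MWPM} corrections have total weight $O(|E|)$'' is not justified as stated. A constant-weight error in a grey layer can branch a single $m$-syndrome onto an intersecting blue layer at a point $\Theta(L)$ from any condensing boundary, forcing a $\Theta(L)$-weight matching on that blue layer even though $|E|=O(1)$; summed over layers this threatens the later $O(|E|/L)$ count of bad grey layers, which is the load-bearing step. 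The paper does not attempt a global weight-conservation statement. Instead it reasons layer by layer: each blue layer's MWPM output lies in one of two logical classes (distinguished by the parity of matchings to the smooth boundaries), and the ``bad'' class, which pushes odd parity back into every adjacent grey layer, is chosen by minimum-weight matching only when the blue-layer error itself already has weight $\ge L/2$; a ``cleaning'' bound -- a partial stabilizer of the input code supported on up to $w/2$ qubits can be pushed onto a single string inside an overlapping $X$-check layer -- then caps the multiplicative blow-up at a factor of $w/2$. If you replace your global $O(|E|)$ bookkeeping with this per-layer parity-and-cleaning accounting, the rest of your outline (few bad layers $\Rightarrow$ small input-code error $\Rightarrow$ \textbf{QTCD} succeeds $\Rightarrow$ $E\cdot R$ is a stabilizer) goes through; as written there is a genuine gap at precisely the point you anticipated.
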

\begin{proof}
There are two inequivalent logical classes for \textbf{MWPM} on the blue layers in stage 1 of Algorithm~\ref{alg:CMD}~which differ by flipping the parity of the matching to the smooth boundaries. 
One of these matchings produces a correction that neutralizes the number of syndromes entering the $X$-check layer from each adjacent data layer, the other has an odd parity of syndromes leaving the $X$-check layer into each adjacent data layer. 
The errors within an $X$-check layer must have weight $\geq L/2$ to cause the latter type of correction to occur. 
In this case, it is possible that the residual errors introduced to the adjacent data layers by the correction correspond to syndromes of an error in those data layers with weight $\geq L/2$ which takes the form of a partial stabilizer of the input code, i.e.~an error on up to $w/2$ of the qubits of the input code. 
In the former case, the multiplicative increase in the effective weight of the error on the input code is upper bounded by a factor of $w/2$. This is because, in the worst case, error strings which correspond to a partial stabilizer of the input code of weight up to $w/2$ can be cleaned into a single string inside an overlapping $X$-check layer. 

When \textbf{MWPM} is applied to a grey layer, it can only produce a nontrivial spanning error if the original error has weight greater than $L/2$. This is in analogy to standard minimum-weight perfect-matching decoding of the surface code. 
To cause a logical error of the Layer Code under optimal decoding, the number of grey and blue layers with spanning errors at stage 3 of Algorithm~\ref{alg:CMD} must be greater than $d_{\text{input}}/2$. 
To cause a logical error of the Layer Code with a decoder that achieves a fraction of the distance $ d_{\text{input}}/f$, the number of layers with spanning errors must be $ d_{\text{input}}/f$. 
Combining the number of Layers that must support a spanning error $d_{\text{input}}/2$, with the weight of a spanning error $L/2$, and the potential distance reduction factor $2/w$ establishes that the iterated matching decoder achieves distance $d_{\text{input}}L/(2 w f)$. 
Interestingly, if the input decoder achieves the optimal $d_{\text{input}}/2$, then the iterated decoder achieves a quarter of the lower bound on the Layer Code distance $d_{\text{input}}L/(2 w)$

For a family of good Quantum Tanner Codes, there exists a constant $\delta$ such that \textbf{QTCD} decodes up to a distance $d_{\text{QTCD}} \geq \delta n$, see Refs.~\cite{Leverrier2022Efficient,Dinur2022Good,Gu2022An,Leverrier2022A}. 
Hence, The concatenated matching decoder described in Algorithm~\ref{alg:CMD} achieves distance $d_{\text{CMD}}\geq \frac{\delta}{2w} L^2$. 
\end{proof}

\end{document}